\newtheorem{Thm}{Theorem}
\newtheorem{Lem}{Lemma}
\theoremstyle{definition}
\newcommand{\bra}[1]{{\left\langle #1 \right|}}
\newcommand{\ket}[1]{{\left| #1 \right\rangle}}
\newcommand{\T}{\mbox{$\mathrm{tr}$}}
\begin{document}
\title{Unification of multi-qubit polygamy inequalities}

\author{Jeong San Kim}
\email{freddie1@suwon.ac.kr} \affiliation{
 Department of Mathematics, University of Suwon, Kyungki-do 445-743, Korea
}
\date{\today}

\begin{abstract}
We establish a unified view of polygamy of multi-qubit entanglement.
We first introduce a two-parameter generalization of entanglement of assistance
namely unified entanglement of assistance for bipartite quantum states, and
provide an analytic lowerbound in two-qubit systems.
We show a broad class of polygamy inequalities of multi-qubit entanglement in terms of
unified entanglement of assistance that encapsulates all known multi-qubit polygamy inequalities as special cases.
We further show that this class of polygamy inequalities can be improved into tighter inequalities
for three-qubit systems.
\end{abstract}

\pacs{
03.67.Mn,  
03.65.Ud 
}
\maketitle
\section{Introduction}

As a quantum correlation among different systems, quantum entanglement
shows an essential difference from classical correlations.
If a pair of parties in a multi-party quantum system are
maximally entangled then they cannot share any entanglement~\cite{ckw,ov} nor classical
correlations~\cite{kw} with the rest of the system.
This restricted shareability of entanglement in multi-party quantum systems is known as {\em
monogamy of entanglement}~(MoE)~\cite{T04}; more entanglement shared between two parties necessarily
implies less entanglement shared with the rest of the system.
Furthermore, shared entanglement between two parties even limits the amount of classical correlation that can
be shared with the other parties.

MoE plays a crucial role in many quantum information processing tasks.
In quantum key-distribution protocols, the possible amount of information an eavesdropper could obtain about
the secret key can be restricted by MoE, which is
the fundamental concept of security proof.
MoE also plays an important role
in condensed-matter physics such as
the $N$-representability problem for
fermions~\cite{anti}.

The first characterization of MoE was proposed as an inequality in three-qubit systems~\cite{ckw}
using concurrence~\cite{ww} to quantify shared bipartite entanglement. Later, monogamy inequality was
generalized into multi-qubit systems in terms of various entanglement measures~\cite{ov, KSRenyi, KT, KSU},
and also some cases of higher-dimensional quantum systems rather than qubits~\cite{kds}.

Whereas, monogamy inequality is about the restricted shareability of multipartite entanglement,
the dual concept of the sharable entanglement, namely distributed entanglement, is known to have a
polygamous property in multipartite quantum systems.
A mathematical characterization for the {\em polygamy of entanglement} was first provided for multi-qubit
systems~\cite{gbs} using concurrence of assistance (CoA)~\cite{lve} to quantify the distributed bipartite entanglement.
Recently, a broad class of polygamy inequalities for multi-qubit systems was proposed~\cite{KT}, and a
polygamy inequality in tripartite quantum systems of arbitrary dimension was also shown using
entanglement of assistance (EoA)~\cite{BGK}.

Here, we provide a unified view of these polygamy inequalities of multi-qubit entanglement.
We first introduce a two-parameter generalization of
EoA namely {\em unified entanglement of assistance} (UEoA)
for bipartite quantum states, and provide an analytic lower bound for UEoA
in two-qubit systems. By investigating the functional relation between UEoA and concurrence, we establish
a two-parameter class of polygamy inequalities of multi-qubit entanglement in terms of UEoA.
This new class of polygamy inequalities reduces to every known  multi-qubit polygamy inequalities
as special cases, therefore our new class of polygamy inequalities also provides an interpolation among
various polygamy inequalities of multi-qubit entanglement.
We further show that our polygamy inequality can be improved into a
tighter inequality for three-qubit pure states.

This paper is organized as follows. In Section~\ref{Subsec:
definition}, we define UEoA for bipartite
quantum states, and discuss its relation
with CoA, EoA, and Tsallis entanglement of assistance (TEoA).
In Section~\ref{Subsec: analytic}, we provide an
analytic lower bound of UEoA in two-qubit systems.
In Section~\ref{Sec: poly}, we
derive a class of polygamy inequalities of multi-qubit entanglement in terms of
UEoA, and summarize our results in Section~\ref{Conclusion}.


\section{Unified Entanglement and Unified Entanglement of Assistance}
\label{Sec: Tqentanglement}

\subsection{Definition}
\label{Subsec: definition}
Let us first recall the definition of unified entropy for quantum states~\cite{ue1, ue2}.
For $q,~s \geq 0$ such that $q \neq 1$, $s \neq 0$, unified-$(q,s)$ entropy of a
quantum state $\rho$ is
\begin{equation}
S_{q,s}(\rho):=\frac{1}{(1-q)s}\left[{\left(\T \rho^{q}\right)}^s-1\right].
\label{uqs-entropy}
\end{equation}
Unified-$(q,s)$ entropy has singularities at $q=1$ or $s=0$, however
it converges to von Neumann entropy as $q$ tends to 1;
\begin{align}
\lim_{q \rightarrow 1}S_{q,s}(\rho)=-\T \rho\log\rho 
=:S(\rho), \label{T1}
\end{align}
and R\'enyi-$q$ entropy~\cite{renyi, horo} as $s$ tends to $0$,
\begin{equation}
\lim_{s \rightarrow 0}S_{q,s}(\rho)=\frac{1}{1-q}\log \T \rho^{q}=:R_{q}(\rho).
\label{Renyi}
\end{equation}
For this reason, we can consider unified-$(q,s)$ entropy as von Neumann entropy
or R\'enyi-$q$ entropy
when $q=1$ or $s=0$ respectively; for any quantum state $\rho$ we just denote
$S_{1,s}(\rho)=S(\rho)$ and $S_{q,0}(\rho)=R_{q}(\rho)$.
We also note that unified-$(q,s)$ entropy converges to Tsallis-$q$ entropy ~\cite{tsallis}
when $s$ tends to $1$,
\begin{equation}
S_{q,1}(\rho)=\frac{1}{1-q}\left(\T \rho^{q}-1\right)=:T_{q}(\rho).
\label{Tsallis}
\end{equation}

For a bipartite pure state $\ket{\psi}_{AB}$ and each $q,~s \geq 0$,
its unified-$(q,s)$ entanglement~\cite{KSU} is defined as
\begin{equation}
E_{q,s}\left(\ket{\psi}_{AB} \right):=S_{q,s}(\rho_A),
\label{UEpure}
\end{equation}
where $\rho_A=\T _{B} \ket{\psi}_{AB}\bra{\psi}$ is the reduced
density matrix of $\ket{\psi}_{AB}$ onto subsystem $A$.
For a mixed state $\rho_{AB}$, its unified-$(q,s)$ entanglement is
\begin{equation}
E_{q,s}\left(\rho_{AB} \right):=\min \sum_i p_i
E_{q,s}(\ket{\psi_i}_{AB}), \label{UEmixed}
\end{equation}
where the minimum is taken over all possible pure state
decompositions of $\rho_{AB}=\sum_{i}p_i
\ket{\psi_i}_{AB}\bra{\psi_i}$.

Due to the continuity of unified-$(q,s)$ entropy with respect to $q$
and $s$, unified-$(q,s)$ entanglement in Eq.~(\ref{UEmixed})
converges to the entanglement of formation (EoF) as $q$ tends to 1,
\begin{align}
\lim_{q\rightarrow1}E_{q,s}\left(\rho_{AB} \right)=E_{\rm f}\left(\rho_{AB} \right),
\end{align}
where $E_{\rm f}(\rho_{AB})$ is EoF of $\rho_{AB}$ defined as
\begin{equation}
E_{\rm f}(\rho_{AB})=\min \sum_{i}p_i S(\rho^{i}_{A}) \label{EoF}
\end{equation}
with $\T_{B}|\psi^i\rangle_{AB}\langle\psi^i|=\rho^{i}_{A}$ and the
minimization being taken over all possible pure state decompositions of $\rho_{AB}=\sum_{i}p_i
\ket{\psi_i}_{AB}\bra{\psi_i}$.
When $s$
tends to $0$, unified-$(q,s)$ entanglement reduces to a
one-parameter class of entanglement measures namely R\'enyi-$q$
entanglement~\cite{KSRenyi}
\begin{align}
\lim_{s\rightarrow 0}E_{q,s}\left(\rho_{AB} \right)={\mathcal
R}_{q}\left(\rho_{AB} \right).
\label{unirenyi}
\end{align}
Unified-$(q,s)$ entanglement also reduces to another one-parameter
class called Tsallis-$q$ entanglement~\cite{KT} as $s$ tends to $1$,
\begin{align}
\lim_{s\rightarrow 1}E_{q,s}\left(\rho_{AB} \right)={\mathcal
T}_{q}\left(\rho_{AB} \right).
\label{uniT}
\end{align}
In other words, unified-$(q,s)$ entanglement is a two-parameter
generalization of EoF including the classes of R\'enyi and Tsallis
entanglement as special cases.

As a dual concept of EoF, EoA of a bipartite mixed state $\rho_{AB}$ is defined as~\cite{cohen}
\begin{equation}
E^a(\rho_{AB})=\max \sum_{i}p_i S(\rho^{i}_{A}), \label{EoA}
\end{equation}
where the maximum is taken over all possible pure state
decompositions of $\rho_{AB}=\sum_{i} p_i
|\psi^i\rangle_{AB}\langle\psi^i|$ with
$\T_{B}|\psi^i\rangle_{AB}\langle\psi^i|=\rho^{i}_{A}$. Here, we
note that EoA in Eq.~(\ref{EoA}) is clearly a mathematical dual to
EoF in Eq.~(\ref{EoF}) because one is the maximum average
entanglement over all possible pure state decompositions whereas the
other takes the minimum. Moreover, by introducing a third party $C$
that has the purification of $\rho_{AB}$, $E^a(\rho_{AB})$ can also
be considered as the maximum achievable entanglement between $A$ and
$B$ assisted by $C$~\cite{BGK}. (This is the reason why it is called the {\em
assistance}.) In other words, $E^a(\rho_{AB})$ is the maximal
entanglement that can be distributed between $A$ and $B$
assisted by the environment $C$; therefore, EoA is also physically dual
to the concept of {\em formation}.

Similar to the duality between EoF and EoA, we define UEoA of $\rho_{AB}$
as the maximum average entanglement
\begin{equation}
E_{q,s}\left(\rho_{AB} \right):=\max \sum_i p_i
E_{q,s}(\ket{\psi_i}_{AB}) \label{UEoA}
\end{equation}
over all possible pure state decompositions of $\rho_{AB}$.
Due to the continuity of unified entropy with respect to $q$
and $s$, we have
\begin{align}
\lim_{q\rightarrow1}{E^a_{q,s}}\left(\rho_{AB}
\right)=E^a\left(\rho_{AB} \right), \label{UE_EoA}
\end{align}
where $E^a(\rho_{AB})$ is the EoA of $\rho_{AB}$ in Eq.~(\ref{EoA}).
When $q$ tends to $1$ UEoA reduces to TEoA~\cite{KT},
\begin{align}
\lim_{s\rightarrow 1}E^a_{q,s}\left(\rho_{AB} \right)={\mathcal
T}^a_{q}\left(\rho_{AB} \right),
\nonumber\\
\label{uniToA}
\end{align}
where ${\mathcal T}^a_{q}\left(\rho_{AB} \right)$ is TEoA of
$\rho_{AB}$ defined as
\begin{equation}
{\mathcal T}^a_{q}\left(\rho_{AB} \right):=\max \sum_i p_i {\mathcal
T}_{q}(\ket{\psi_i}_{AB}). \label{TEoA}
\end{equation}

\subsection{Analytic Evaluation }
\label{Subsec: analytic}

For a bipartite pure state $\ket \psi_{AB}$, its concurrence~\cite{ww},
$\mathcal{C}(\ket \psi_{AB})$ is
\begin{equation}
\mathcal{C}(\ket \psi_{AB})=\sqrt{2(1-\T\rho^2_A)}, \label{pure
state concurrence}
\end{equation}
where $\rho_A=\T_B(\ket \psi_{AB}\bra \psi)$. For a mixed
state $\rho_{AB}$, its concurrence is
\begin{equation}
\mathcal{C}(\rho_{AB})=\min \sum_k p_k \mathcal{C}({\ket
{\psi_k}}_{AB}), \label{mixed state concurrence}
\end{equation}
where the minimum is taken over all possible pure state
decompositions, $\rho_{AB}=\sum_kp_k{\ket {\psi_k}}_{AB}\bra
{\psi_k}$.

For a two-qubit pure state $\ket{\psi}_{AB}$ with Schmidt
decomposition
\begin{equation}
\ket{\psi}_{AB}=\sqrt{\lambda_0}\ket{00}_{AB}+\sqrt{\lambda_1}\ket{11}_{AB}
\label{schm}
\end{equation}
with $\rho_A=\T_B(\ket \psi_{AB}\bra
\psi)=\lambda_0\ket{0}_{A}\bra{0}+\lambda_1\ket{1}_{A}\bra{1}$,
$\mathcal{C}(\ket \psi_{AB})$ in Eq.~(\ref{pure state concurrence})
can be rewritten as
\begin{equation}
\mathcal{C}(\ket
\psi_{AB})=\sqrt{2(1-\T\rho^2_A)}=2\sqrt{\lambda_0\lambda_1},
\label{conlam}
\end{equation}
Here we note that
\begin{equation}
2\sqrt{\lambda_0\lambda_1}=\left(\T\sqrt{\rho_A}\right)^2-1=
S_{\frac{1}{2}, 2}\left(\rho_A\right)=E_{\frac{1}{2},
2}\left(\ket{\psi}_{AB}\right), \label{lamuni}
\end{equation}
therefore unified-$(q,s)$ entanglement of a two-qubit pure state
$\ket{\psi}_{AB}$ reduces to the concurrence when $q=1/2$ and $s=2$.
Consequently, we have
\begin{equation}
\mathcal{C}(\rho_{AB})=E_{\frac{1}{2}, 2}\left(\rho_{AB}\right),
\label{conuni}
\end{equation}
for a two-qubit mixed state $\rho_{AB}$ because both concurrence and
unified-$(q,s)$ entanglement of bipartite mixed states are defined
by the minimum average
entanglement over all possible pure-state decompositions of
$\rho_{AB}$.

In two-qubit systems, concurrence has an analytic formula~\cite{ww};
for a two-qubit state $\rho_{AB}$,
\begin{equation}
\mathcal{C}(\rho_{AB})=\max\{0,
\lambda_1-\lambda_2-\lambda_3-\lambda_4\}, \label{C_formula}
\end{equation}
where $\lambda_i$'s are the eigenvalues, in decreasing order, of
$\sqrt{\sqrt{\rho_{AB}}\tilde{\rho}_{AB}\sqrt{\rho_{AB}}}$ and
$\tilde{\rho}_{AB}=\sigma_y \otimes\sigma_y
\rho^*_{AB}\sigma_y\otimes\sigma_y$ with the Pauli operator
$\sigma_y$. Moreover, concurrence in two-qubit systems is related
with EoF by a monotonically increasing, convex function,
\begin{equation}
 E_{\rm f} (\rho_{AB}) = {\mathcal E}(\mathcal{C}\left(\rho_{AB}\right)),
\label{C_EoF}
\end{equation}
where
\begin{equation}
{\mathcal E}(x) = H\left(\frac{1-\sqrt{1-x^2}}{2}\right),
\hspace{0.5cm}\mbox{for } 0 \le x \le 1, \label{eps}
\end{equation}
with the binary entropy function $H(t) = -[t\log t + (1-t)\log
(1-t)]$~\cite{ww}. This function relation between concurrence and
EoF is also true for any bipartite pure state with Schmidt-rank 2.
In other words, the analytic formula of concurrence in
Eq.~(\ref{C_formula}) together with the functional relation in
Eq.~(\ref{C_EoF}) lead to an analytic formula of EoF in two-qubit
systems.

Recently, it was shown that concurrence also has a functional relation with
unified-(q,s) entanglement in two-qubit systems~\cite{KSU}; for any
two-qubit mixed state $\rho_{AB}$ (as well as any bipartite pure
state with Schmidt-rank 2),
\begin{equation}
E_{q,s}\left(\rho_{AB}\right)=f_{q,s}\left(\mathcal{C}(\rho_{AB})
\right), \label{eucmix}
\end{equation}
for $q\geq1$, $0 \leq s \leq1$ and $qs\leq 3$ where $f_{q,s}(x)$ is a
differentiable function
\begin{align}
f_{q,s}(x)=&\frac{\left(\left(1+\sqrt{1-x^2}\right)^{q}
+\left(1-\sqrt{1-x^2}\right)^{q}\right)^s}{(1-q)s2^{qs}}\nonumber\\
&-\frac{1}{(1-q)s}\label{f}
\end{align}
on $0 \leq x \leq 1$. This functional relation in Eq.~(\ref{eucmix})
was established by showing the monotonicity and convexity of
$f_{q,s}(x)$ for $q\geq1$, $0 \leq s \leq1$ and $qs\leq 3$. $f_{q,s}(x)$ reduces to
${\mathcal E}(x)$ in Eq.~(\ref{eps}) as $q$ tends to 1.

Here, we note that $f_{q,s}(x)$ in Eq.~(\ref{f}) also relates UEoA
with CoA in two-qubit systems.
\begin{Lem}
For $q\geq1$, $0 \leq s \leq1$, $qs\leq 3$ and any two-qubit state $\rho_{AB}$,
\begin{equation}
E^a_{q,s}\left(\rho_{AB}\right)\geq
f_{q,s}\left(\mathcal{C}^a(\rho_{AB}) \right) \label{UEoA_CoA}
\end{equation}
where $E^a_{q,s}\left(\rho_{AB}\right)$ and
$\mathcal{C}^a(\rho_{AB})$ are UEoA and CoA of $\rho_{AB}$
respectively. \label{Lem: UEoA_CoA}
\end{Lem}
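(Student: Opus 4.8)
The plan is to derive the inequality from the pure-state functional relation in Eq.~(\ref{eucmix}) together with the convexity of $f_{q,s}$, by evaluating the maximum defining UEoA on a carefully chosen decomposition. First I would take a pure-state decomposition $\rho_{AB}=\sum_k p_k\ket{\psi_k}_{AB}\bra{\psi_k}$ that is optimal for the concurrence of assistance, i.e.\ one for which $\mathcal{C}^a(\rho_{AB})=\sum_k p_k\,\mathcal{C}(\ket{\psi_k}_{AB})$; such a decomposition exists because the maximum in the definition of $\mathcal{C}^a$ is attained over the compact set of (Carath\'eodory-bounded) pure-state decompositions of $\rho_{AB}$.

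Since every pure state of a two-qubit system has Schmidt rank at most $2$, the functional relation of Eq.~(\ref{eucmix}) applies to each $\ket{\psi_k}_{AB}$ individually, so that $E_{q,s}(\ket{\psi_k}_{AB})=f_{q,s}\!\left(\mathcal{C}(\ket{\psi_k}_{AB})\right)$ for every $k$ (the Schmidt-rank-$1$, i.e.\ product-state, case being included since $f_{q,s}(0)=0$). Because $E^a_{q,s}(\rho_{AB})$ is, by Eq.~(\ref{UEoA}), the \emph{maximum} of the average unified-$(q,s)$ entanglement over all pure-state decompositions, this particular decomposition yields the lower bound
\begin{equation}
E^a_{q,s}(\rho_{AB})\ \geq\ \sum_k p_k\,E_{q,s}(\ket{\psi_k}_{AB})\ =\ \sum_k p_k\,f_{q,s}\!\left(\mathcal{C}(\ket{\psi_k}_{AB})\right).
\label{plan:lb}
\end{equation}
Finally I would invoke the convexity of $f_{q,s}$ on $[0,1]$ --- valid for $q\geq1$, $0\leq s\leq1$ and $qs\leq3$, the very property on which Eq.~(\ref{eucmix}) rests --- so that Jensen's inequality gives
\begin{equation}
\sum_k p_k\,f_{q,s}\!\left(\mathcal{C}(\ket{\psi_k}_{AB})\right)\ \geq\ f_{q,s}\!\left(\sum_k p_k\,\mathcal{C}(\ket{\psi_k}_{AB})\right)\ =\ f_{q,s}\!\left(\mathcal{C}^a(\rho_{AB})\right).
\end{equation}
Chaining this with Eq.~(\ref{plan:lb}) would then yield Eq.~(\ref{UEoA_CoA}).

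I do not anticipate a serious analytic obstacle: the two nontrivial ingredients --- the pure-state identity $E_{q,s}=f_{q,s}\circ\mathcal{C}$ and the convexity of $f_{q,s}$ in the stated parameter window --- are both already available from \cite{KSU}, and the parameter range assumed in the Lemma coincides exactly with the range in which that convexity holds, so that Jensen's inequality applies with the correct orientation. The points that do require care are (i) that one optimizes over the \emph{concurrence}-of-assistance decomposition rather than the UEoA-optimal one, so that the argument of $f_{q,s}$ collapses precisely to $\mathcal{C}^a(\rho_{AB})$, and (ii) that the Schmidt-rank-$\leq 2$ structure of two-qubit pure states is what permits Eq.~(\ref{eucmix}) to be invoked term by term. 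One should also expect the bound of Eq.~(\ref{UEoA_CoA}) to be non-tight in general, since the decomposition maximizing the average concurrence need not be the one maximizing the average unified-$(q,s)$ entanglement.
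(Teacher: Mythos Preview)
Your proof is correct and follows essentially the same approach as the paper: take the CoA-optimal decomposition, apply the pure-state identity $E_{q,s}=f_{q,s}\circ\mathcal{C}$ termwise, use convexity of $f_{q,s}$ (Jensen) to pull out the average concurrence, and bound above by the definition of UEoA. The only cosmetic difference is the order in which you chain the inequalities.
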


\begin{proof}
Let $\rho_{AB}=\sum_kp_k{\ket {\psi_k}}_{AB}\bra {\psi_k}$ be the
optimal decomposition realizing CoA,
\begin{align}
\mathcal{C}^a(\rho_{AB})=\sum_kp_k \mathcal{C}\left({\ket
{\psi_k}}_{AB}\right),
\end{align}
then we have
\begin{align}
f_{q,s}\left(\mathcal{C}^a(\rho_{AB})
\right)=&f_{q,s}\left(\sum_kp_k \mathcal{C}\left({\ket
{\psi_k}}_{AB}\right)\right)\nonumber\\
\leq&\sum_kp_k f_{q,s}\left(\mathcal{C}\left({\ket
{\psi_k}}_{AB}\right)\right)\nonumber\\
=&\sum_kp_k E_{q,s}\left({\ket
{\psi_k}}_{AB}\right)\nonumber\\
\leq&E^a_{q,s}\left(\rho_{AB}\right),
\end{align}
where the first inequality is due to the convexity of $f_{q,s}$ for the range of
$q\geq1$, $0 \leq s \leq1$ and $qs\leq 3$, the
second equality is the functional relation of UEoA and concurrence
for two-qubit pure states, and the last inequality is by the
definition of UEoA.
\end{proof}

Thus, together with the analytic formula of two-qubit concurrence in
Eq.~(\ref{C_formula}), Lemma~\ref{Lem: UEoA_CoA} provides an
analytic lowerbound of UEoA for two-qubit systems.

\section{Multi-qubit Polygamy Inequality of Entanglement}
\label{Sec: poly}

Using the square of concurrence (sometimes, referred to as tangle)
to quantify bipartite entanglement, monogamy of multi-qubit
entanglement was mathematically characterized as an
inequality~\cite{ckw,ov}; for an $n$-qubit pure state
$\ket{\psi}_{A_1A_2\cdots A_n}$,
\begin{equation}
\mathcal{C}_{A_1 (A_2 \cdots A_n)}^2  \geq  \mathcal{C}_{A_1 A_2}^2
+\cdots+\mathcal{C}_{A_1 A_n}^2, \label{nCmono}
\end{equation}
where $\mathcal{C}_{A_1 (A_2 \cdots
A_n)}=\mathcal{C}(\ket{\psi}_{A_1(A_2\cdots A_n)})$ is the
concurrence of $\ket{\psi}_{A_1A_2\cdots A_n}$ with respect to the
bipartite cut between $A_1$ and the others, and
$\mathcal{C}_{A_1A_i}=\mathcal{C}(\rho_{A_1A_i})$ is the concurrence
of the reduced density matrix $\rho_{A_1A_i}$ for $i=2,\ldots, n$.
This monogamous property of multi-qubit entanglement was also
established in terms of various entanglement measures using R\'enyi
and Tsallis entropies~\cite{KSRenyi, KT}, and these classes of
monogamy inequalities were recently generalized as a generic
two-parameter class in terms of unified-$(q,s)$
entanglement~\cite{KSU}.

Whereas monogamy of multipartite entanglement reveals the restricted
shareability of multi-party entanglement in terms of entanglement measures,
entanglement of assistance, the dual concept of entanglement measures,
was also shown to have a dually monogamous
(that is, polygamous) relation in multi-party quantum systems; for a
multi-qubit pure state $\ket{\psi}_{A_1 \cdots A_n}$, we have the
following polygamy inequality,
\begin{equation}
\mathcal{C}_{A_1 (A_2 \cdots A_n)}^2  \leq  (\mathcal{C}^a_{A_1 A_2})^2
+\cdots+(\mathcal{C}^a_{A_1 A_n})^2,
\label{nCdual}
\end{equation}
where $\mathcal{C}^a_{A_1 A_i}$ is the CoA of the reduced density
matrix $\rho_{A_1A_i}$ for $i=2,\ldots, n$.

In other words, the bipartite entanglement between $A_1$ and $A_2\cdots A_n$
is an upper bound for the sum of two-qubit entanglement between $A_1$ and each of $A_i's$
in monogamy inequalities. Moreover, the same quantity also plays as a lowerbound for the sum of
two-qubit distributed entanglement in the polygamy inequality.
For three-party pure states, a polygamy inequality of entanglement
was also introduced by using EoA~\cite{BGK}, and
a class of polygamy inequalities for multi-qubit mixed states
was also introduced using TEoA~\cite{KT}.

Here we establish a unified view of this
polygamous property of multi-qubit entanglement by introducing a
two-parameter class of polygamy inequalities in terms of UEoA.
Before we provide the class of polygamy inequalities,
we first prove an important property of the function $f_{q,s}(x)$
in Eq.~(\ref{f}).

\begin{Lem}
For $1 \leq q\leq 2$ and $-q^2+4q-3 \leq s \leq1$,
\begin{align}
f_{q,s}\left(\sqrt{x^2+y^2}\right)\leq f_{q,s}(x)+f_{q,s}(y)
\label{fnega}
\end{align}
for $0\leq x, y, x^2+y^2 \leq1$.
 \label{Lem: fnega}
\end{Lem}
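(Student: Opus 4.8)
The plan is to deduce Eq.~(\ref{fnega}) from a concavity property. First note that $f_{q,s}(0)=0$: putting $x=0$ in Eq.~(\ref{f}) gives $\sqrt{1-x^2}=1$, hence $(1+1)^q+(1-1)^q=2^q$ and $\bigl(2^q\bigr)^s=2^{qs}$, so the two terms cancel. Set $g(t):=f_{q,s}(\sqrt t)$ for $t\in[0,1]$. If $g$ is concave on $[0,1]$ then, since $g(0)=0$, it is subadditive: for $a=x^2$ and $b=y^2$ with $a+b\le1$ (the case $a+b=0$ being trivial) one has $g(a)\ge\frac{a}{a+b}g(a+b)+\frac{b}{a+b}g(0)$ and likewise $g(b)\ge\frac{b}{a+b}g(a+b)$, and adding gives $g(a)+g(b)\ge g(a+b)$, which is exactly $f_{q,s}(x)+f_{q,s}(y)\ge f_{q,s}\!\bigl(\sqrt{x^2+y^2}\bigr)$. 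So it suffices to prove that $g$ is concave (it is continuous on $[0,1]$ and smooth on $(0,1)$, so only $g''\le0$ on the open interval is needed).

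To verify concavity I would pass to the variable $u=\sqrt{1-t}\in[0,1]$, so that $\sqrt{1-x^2}=u$ and, writing $\phi(u):=(1+u)^q+(1-u)^q$, the function becomes $G(u):=g(1-u^2)=\bigl(\phi(u)^{s}-2^{qs}\bigr)\big/\bigl((1-q)s\,2^{qs}\bigr)$. Differentiating $t=1-u^2$ twice gives $g''(t)=\bigl(4u^2\bigr)^{-1}\bigl(G''(u)-u^{-1}G'(u)\bigr)$; computing $G'$ and $G''$ and pulling out the factor $\phi^{s-2}\big/\bigl((1-q)2^{qs}\bigr)$, which is negative for $q>1$, one finds that $g''(t)\le0$ on $(0,1)$ is equivalent to
\[
B_{q,s}(u):=(s-1)\,\phi'(u)^2+\phi(u)\phi''(u)-\frac{\phi(u)\phi'(u)}{u} \geq 0.
\]
Since $s-1\le0$, the map $s\mapsto B_{q,s}(u)$ is nondecreasing, so it is enough to treat the worst case $s=-q^2+4q-3$, for which $s-1=-(q-2)^2$; thus everything reduces to proving
\[
\phi(u)\phi''(u)-\frac{\phi(u)\phi'(u)}{u} \geq (q-2)^2\,\phi'(u)^2 ,\qquad u\in(0,1),\ 1\le q\le2.
\]
(At $q=1$ the function degenerates, $f_{1,s}=\mathcal E$ of Eq.~(\ref{eps}) for every $s$, and the admissible band becomes $0\le s\le1$; I would recover this boundary case as the limit $q\to1^{+}$.)

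This last inequality is the heart of the matter and, I expect, the main obstacle. A useful tool is the identity $(1-u^2)\phi''+2(q-1)u\,\phi'=q(q-1)\phi$ (immediate from $\bigl((1\pm u)^q\bigr)''=q(q-1)(1\pm u)^{q-2}$): using it to eliminate $\phi''$ and dividing by $\phi^2$, the target inequality turns into a quadratic inequality in $r:=\phi'/\phi\ge0$, namely $(q-2)^2u(1-u^2)r^2+\bigl[(2q-3)u^2+1\bigr]r-uq(q-1)\le0$, whose constant term is $\le0$; so it amounts to the statement that $\phi'/\phi$ does not exceed the unique nonnegative root of that quadratic. Equivalently --- and perhaps more tractably --- expanding $\phi=2\sum_{k\ge0}\binom q{2k}u^{2k}$ and collecting powers of $u^2$ shows that it suffices to verify the coefficientwise inequalities $\binom n2 S_n\ge\bigl((q-2)^2+1\bigr)T_n$ for all $n\ge2$, where $S_n=\tfrac12\bigl[\binom{2q}{2n}+(-1)^n\binom qn\bigr]$ and $T_n=\tfrac{q^2}{8}\bigl[\binom{2q-2}{2n-2}+(-1)^n\binom{q-1}{n-1}\bigr]$ are the Taylor coefficients of $\phi^2/4$ and $u^2(\phi')^2/16$ respectively; using $\binom{2q}{2n}=\tfrac{2q(2q-1)}{2n(2n-1)}\binom{2q-2}{2n-2}$ and $\binom qn=\tfrac qn\binom{q-1}{n-1}$, this collapses to a one-line comparison of $\binom{2q-2}{2n-2}$ with $(-1)^n\binom{q-1}{n-1}$, the coefficients being rational in $n$ and $q$. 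The delicate point is this final family of inequalities: for $q\in(1,2)$ the sign of $\binom{2q-2}{2n-2}$ changes as $q$ passes $3/2$ (it vanishes identically for $n\ge2$ at $q=3/2$), so the estimate must be organized according to where $q$ sits in $[1,2]$, with the degenerate endpoints $q=1$ (where $g$ reduces to $\mathcal E(\sqrt{\,\cdot\,})$) and $q=2$ (where $g$ is affine and both sides vanish identically) serving as the natural boundary checks.
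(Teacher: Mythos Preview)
Your approach is genuinely different from the paper's. The paper works directly with the two-variable function $h_{q,s}(x,y)=f_{q,s}(\sqrt{x^2+y^2})-f_{q,s}(x)-f_{q,s}(y)$ on the quarter-disk: it shows there are no interior critical points (by proving an auxiliary one-variable function $n_{q,s}$ is strictly increasing, which absorbs most of the effort), and then checks nonpositivity on the boundary arc $x^2+y^2=1$ via a single-variable function $m_{q,s}$. You instead aim for the stronger statement that $g(t)=f_{q,s}(\sqrt{t})$ is concave on $[0,1]$; since $g(0)=0$, concavity gives subadditivity immediately. Your chain-rule reduction of $g''\le0$ to the sign of $B_{q,s}(u)$, the hypergeometric-type identity $(1-u^2)\phi''+2(q-1)u\phi'=q(q-1)\phi$, and the monotonicity-in-$s$ reduction to the lower endpoint $s=-q^2+4q-3$ (where $s-1=-(q-2)^2$) are all correct. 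If the concavity goes through, your route is conceptually cleaner: a single second-derivative sign instead of a two-variable extremum hunt.

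The gap is that the core inequality is not established. You reduce to $(q-2)^2u(1-u^2)r^2+\bigl[(2q-3)u^2+1\bigr]r-uq(q-1)\le0$ with $r=\phi'/\phi$, and sketch two possible attacks (bound $r$ by the positive root; or a coefficientwise Taylor comparison), but complete neither; you yourself flag the sign changes of $\binom{2q-2}{2n-2}$ around $q=3/2$ as a difficulty. That inequality is exactly the hard analytic content of the lemma, so as it stands the plan is a reformulation rather than a proof. Bear in mind, too, that concavity of $g$ is strictly stronger than the subadditivity you need: should your estimates fail for some $(q,s,u)$, the lemma could still hold via the paper's boundary/critical-point argument, so you may be attempting more than necessary. (The paper's proof also leaves one residual step, $m_{q,s}(1/\sqrt{2})\le0$, to graphical verification; but that is a single explicit inequality in $(q,s)$, whereas your remaining step is a one-parameter family in $u$ for each $(q,s)$.)
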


\begin{proof}
In fact, Inequality~(\ref{fnega}) was already shown when $q=1$ or
$q=2$ (consequently $s=1$)~\cite{BGK, KT} so we prove the lemma for
the case of $1<q<2$. The proof method follows the construction used
in~\cite{KSU}.

For $1 < q < 2$ and $-q^2+4q-3 \leq s \leq1$, let us define a
two-variable function $h_q(x,y)$,
\begin{equation}
h_{q,s}(x,y):=f_{q,s}\left(\sqrt{x^2+y^2}\right)-f_{q,s}(x)-f_{q,s}(y),
\label{m_q}
\end{equation}
on the domain ${\mathcal D}=\{ (x,y)| 0\leq x, y, x^2+y^2 \leq1\}$,
then Inequality~(\ref{fnega}) is equivalent to show that $h_{q,s}(x,y)\leq 0$ for the range of
$q$ and $s$.

Because $h_{q,s}(x,y)$ is continuous on the domain $\mathcal D$ and
differentiable in the interior $\mathring {\mathcal D}$, its maximum
or minimum values can arise only at the critical points or on the
boundary of $\mathcal D$. The gradient of $h_{q,s}(x,y)$ is
\begin{equation}
\nabla h_{q,s}(x, y)=\left(\frac{\partial h_{q,s}(x,y)}{\partial x},
\frac{\partial h_{q,s}(x,y)}{\partial y}\right) \label{grad}
\end{equation}
where the first-order partial derivatives are

\begin{widetext}
\begin{align}
\frac{\partial h_{q,s}(x,y)}{\partial x}= &\Gamma
\frac{qsx}{\sqrt{1-x^2}}\left(\Theta(x)^{q}+{\Xi(x)}^{q}\right)^{s-1}
\left({\Theta(x)}^{q-1}-{\Xi(x)}^{q-1}\right)\nonumber\\
&-\Gamma
\frac{qsx}{\sqrt{1-x^2-y^2}}\left({\Theta\left(\sqrt{x^2+y^2}\right)}^{q}
+{\Xi\left(\sqrt{x^2+y^2}\right)}^{q}\right)^{s-1}
\left({\Theta\left(\sqrt{x^2+y^2}\right)}^{q-1}-{\Xi\left(\sqrt{x^2+y^2}\right)}^{q-1}\right),\nonumber\\
\frac{\partial h_{q,s}(x,y)}{\partial y}= &\Gamma
\frac{qsy}{\sqrt{1-y^2}}\left(\Theta(y)^{q}+{\Xi(y)}^{q}\right)^{s-1}
\left({\Theta(y)}^{q-1}-{\Xi(y)}^{q-1}\right)\nonumber\\
&-\Gamma
\frac{qsy}{\sqrt{1-x^2-y^2}}\left({\Theta\left(\sqrt{x^2+y^2}\right)}^{q}
+{\Xi\left(\sqrt{x^2+y^2}\right)}^{q}\right)^{s-1}
\left({\Theta\left(\sqrt{x^2+y^2}\right)}^{q-1}-{\Xi\left(\sqrt{x^2+y^2}\right)}^{q-1}\right)
\label{2pderi}
\end{align}
\end{widetext}
with $\Theta(t)=1+\sqrt{1-t^2}$, $\Xi(t) =1-\sqrt{1-t^2}$ and
$\Gamma=1/\left[(1-q)s2^{sq}\right]$.

Suppose that there exists $(x_0, y_0) \in \mathring {\mathcal D}$
such that $\nabla h_{q,s}(x_0, y_0)=(0,0)$, then
Eq.~(\ref{2pderi}) implies
\begin{equation}
n_{q,s}(x_0)=n_{q,s}(y_0), \label{x0y0}
\end{equation}
where $n_{q,s}(t)$ is a differentiable function
\begin{align}
n_{q,s}(t)=\frac{qs}{\sqrt{1-t^2}}&\left({\Theta(t)}^{q}+{\Xi(t)}^{q}\right)^{s-1}\nonumber\\
&\cdot\left({\Theta(t)}^{q-1}-{\Xi(t)}^{q-1}\right), \label{n_q}
\end{align}
on $0<t<1$.

We first show that $n_{q,s}(t)$ is a strictly increasing function
and thus Eq.~(\ref{x0y0}) implies $x_0=y_0$. This is also enough to
show that ${{\rm d}n_{q,s}(t)}/{{\rm d}t}>0$ for $0<t<1$ because
$n_{q,s}(t)$ is differentiable with respect to $t$. The first-order
derivative of $n_{q,s}(t)$ is
\begin{align}
\frac{{\rm d}n_{q,s}(t)}{{\rm d}t}=&
\Omega\frac{t^2\left(\Theta(t)^q+\Xi(t)^q\right)\left(\Theta(t)^{q-1}-\Xi(t)^{q-1}\right)}
{\sqrt{1-t^2}}\nonumber\\
&+\Omega q(1-s)t^2\left(\Theta(t)^{q-1}-\Xi(t)^{q-1}\right)^2\nonumber\\
&-\Omega(q-1)t^2
\left(\Theta(t)^q+\Xi(t)^q\right)\nonumber\\
&~~~~~~~~~~~\cdot\left(\Theta(t)^{q-2}+\Xi(t)^{q-2}\right)
\label{g2dn}
\end{align}
with $\Omega=qs\left(\Theta(t)^q+\Xi(t)^q \right)^{s-2}/t\left(
1-t^2\right)$.

For $1 < q < 2$ and $-q^2+4q-3 \leq s \leq1$, we have $q(1-s)\geq
q-2 > q-3$, thus
\begin{widetext}
\begin{align}
\frac{{\rm d}n_{q,s}(t)}{{\rm d}t}>& \Omega
t^2\left(\Theta(t)^q+\Xi(t)^q\right)\left[\frac{\left(\Theta(t)^{q-1}-\Xi(t)^{q-1}\right)}{\sqrt{1-t^2}}
-2\left(\Theta(t)^{q-2}+\Xi(t)^{q-2}\right)\right]\nonumber\\
&+\Omega(q-3)t^2\left[\left(\Theta(t)^{q-1}-\Xi(t)^{q-1}\right)^2
-\left(\Theta(t)^q+\Xi(t)^q\right)\left(\Theta(t)^{q-2}+\Xi(t)^{q-2}\right)\right].
\label{g2dn2}
\end{align}
\end{widetext}
Due to the relation $\Theta(t)-\Xi(t)=2\sqrt{1-t^2}$, we have
\begin{align}
\frac{\left(\Theta(t)^{q-1}-\Xi(t)^{q-1}\right)}{\sqrt{1-t^2}}
-&2\left(\Theta(t)^{q-2}+\Xi(t)^{q-2}\right)\nonumber\\
=&\frac{t^2\left(\Theta(t)^{q-3}-\Xi(t)^{q-3}\right)}{\sqrt{1-t^2}},
\label{g2dn2rel}
\end{align}
and the binomial series of $\Theta(t)^{\alpha}=\left(1+\sqrt{1-t^2}
\right)^{\alpha}$ and $\Xi(t)^{\alpha}=\left(1-\sqrt{1-t^2}
\right)^{\alpha}$ lead us to
\begin{align}
\Theta(t)^{\alpha}-\Xi(t)^{\alpha}
 \geq & 2\alpha\sqrt{1-t^2},~
 \Theta(t)^{\alpha}+\Xi(t)^{\alpha}
\geq 2 \label{lower}
\end{align}
for real $\alpha$. Furthermore, using the relations
$\Theta(t)+\Xi(t)=2$ and $\Theta(t)\Xi(t)=t^2$, it is also
straightforward to verify that
\begin{align}
&\left(\Theta(t)^{q-1}-\Xi(t)^{q-1}\right)^2\nonumber\\
&~~~~~~~-\left(\Theta(t)^q+\Xi(t)^q\right)
\left(\Theta(t)^{q-2}+\Xi(t)^{q-2}\right)=t^{2q-4}.
\label{squarsimple}
\end{align}
Thus, together with Eqs.~(\ref{g2dn2rel}), (\ref{lower}) and
(\ref{squarsimple}), Inequality~(\ref{g2dn2}) yields
\begin{align}
\frac{{\rm d}n_{q,s}(t)}{{\rm d}t} >& \Omega
\frac{t^4\left(\Theta(t)^q+\Xi(t)^q\right)\left(\Theta(t)^{q-3}-\Xi(t)^{q-3}\right)}
{\sqrt{1-t^2}}\nonumber\\
&~~~~-\Omega4(q-3)t^{2q-2}\nonumber\\
\geq& 4\Omega(q-3)(t^4-t^{2q-2}). \label{g2dn3}
\end{align}

The last term of the inequality is strictly positive for $1< q < 2$
and $0<t<1$, therefore $n_{q,s}(t)$ is a strictly increasing
function for $1< q< 2$ and $-q^2+4q-3 \leq s \leq1$. In other words,
Eq.~(\ref{x0y0}) implies $x_0=y_0$. However, from
Eq.~(\ref{2pderi}), $\nabla h_{q,s}(x_0,
x_0)=(0,0)$ also implies that $n_{q,s}(x_0)=n_{q,s}(\sqrt{2}x_0)$
for some $x_0 \in (0,1)$, which contradicts the strict monotonicity
of $n_{q,s}(t)$. Thus $h_{q,s}(x, y)$ does not have any vanishing
gradient in $\mathring {\mathcal D}$ for $1< q< 2$ and $-q^2+4q-3
\leq s \leq1$.

Now let us consider the function value of $h_{q,s}(x, y)$ on the
boundary of $\mathcal D$, that is, either $x=0$ or $y=0$ or
$x^2+y^2=1$. If $x=0$ or $y=0$ , then clearly $h_{q,s}(x, y)=0$.
Suppose $x^2+y^2=1$ with $x\neq 0$ and $y\neq0$. Then $h_{q,s}(x,
y)$ becomes a single-variable function,
\begin{align}
l_{q,s}(x)
:=&\frac{\left(\left(1+\sqrt{1-x^2}\right)^{q}+\left(1-\sqrt{1-x^2}\right)^{q}\right)^{s}
}{(q-1)s2^{qs}}\nonumber\\
&+\frac{\left(\left(1+x\right)^{q}+\left(1-x\right)^{q}\right)^{s}-2^s-2^{qs}}{(q-1)s2^{qs}}
\label{lqs}
\end{align}
for $0< x< 1$.

Because $(q-1)s2^{qs}>0$ for $1 < q < 2$ and $-q^2+4q-3 \leq s
\leq1$, the sign of the function $l_{q,s}(x)$ is same with that of
the following differentiable function
\begin{align}
m_{q,s}(x):=&\left(\left(1+\sqrt{1-x^2}\right)^{q}+\left(1-\sqrt{1-x^2}\right)^{q}\right)^{s}\nonumber\\
&+\left(\left(1+x\right)^{q}+\left(1-x\right)^{q}\right)^{s}-2^s-2^{qs}.
\label{mqs}
\end{align}
If we consider the derivative of $m_{q,s}(x)$,
\begin{align}
\frac{{\rm d}m_{q,s}(x)}{{\rm
d}x}=&sq\left[\left(1+x\right)^{q}+\left(1-x\right)^{q}\right]^{s-1}\nonumber\\
&\cdot\left[\left(1+x\right)^{q-1}-\left(1-x\right)^{q-1}\right]\nonumber\\
&-\frac{sqx \left[\left(1+\sqrt{1-x^2}\right)^{q}+\left(1-\sqrt{1-x^2}\right)^{q}\right]^{s-1}}{\sqrt{1-x^2}}\nonumber\\
&\cdot\left[\left(1+\sqrt{1-x^2}\right)^{q-1}-\left(1-\sqrt{1-x^2}\right)^{q-1}\right],
\label{mqsderi}
\end{align}
we note that $x=1/\sqrt{2}$ is the only critical point of
$m_{q,s}(x)$ on $0 < x <1$. Furthermore, it is also straightforward to
verify that $m_{q,s}\left(1/\sqrt{2}\right)\leq 0$ for $1 < q < 2$
and $-q^2+4q-3 \leq s \leq1$, which is illustrated in
Figure~\ref{fig1}.
Because $m_{q,s}(0)=m_{q,s}(1)=0$ and
$m_{q,s}\left(1/\sqrt{2}\right)\leq 0$ where $x=1/\sqrt{2}$ is the
only critical point of $m_{q,s}(x)$, $m_{q,s}(x)\leq 0$ through out
the whole range of $0\leq x \leq1$. In other words,
$h_{q,s}(x,y)\leq 0$ for $1 < q < 2$ and $-q^2+4q-3 \leq s \leq1$,
which complete the proof.
\end{proof}
\begin{figure}
\includegraphics[width=.9\linewidth]{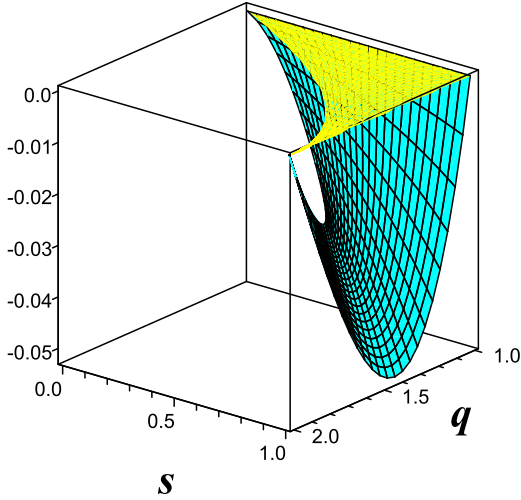}\\
\caption{(Color online) The function values of
$m_{q,s}\left(1/\sqrt{2}\right)$ (patched blue curved surface) are
indicated for real parameters $s$ and $q$. Yellow area on the top of the box
indicates the range of $q$ and $s$ such that $1 \leq q \leq 2$ and $-q^2+4q-3 \leq
s \leq1$.} \label{fig1}
\end{figure}
Now, we are ready to have the following theorem about the polygamy
of multi-qubit entanglement using unified-$(q,s)$ entropy.
\begin{Thm}
For $1 \leq q \leq 2$, $-q^2+4q-3 \leq s \leq1$ and any multi-qubit
state $\rho_{A_1 \cdots A_n}$ , we have
\begin{equation}
E^a_{q,s}\left( \rho_{A_1(A_2 \cdots A_n)}\right)\leq
E^a_{q,s}(\rho_{A_1 A_2}) +\cdots+E^a_{q,s}(\rho_{A_1 A_n})
\label{UEpoly}
\end{equation}
where $E_{q,s}\left( \rho_{A_1(A_2 \cdots A_n)}\right)$ is the
unified-$(q,s)$ entanglement of $\rho_{A_1A_2 \cdots A_n}$ with
respect to the bipartition between $A_1$ and $A_{2}\cdots A_{n}$,
and $E^a_{q,s}(\rho_{A_1 A_i})$ is the UEoA of the reduced density
matrix $\rho_{A_1 A_i}$ for $i=2,\cdots,n$. \label{Thm: poly}
\end{Thm}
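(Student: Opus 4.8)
The plan is to prove the inequality first for multi-qubit pure states and then to extend it to arbitrary mixed states by averaging over an optimal pure-state decomposition of $\rho_{A_1 \cdots A_n}$. \emph{Pure-state case.} For a pure state $\ket{\psi}_{A_1 \cdots A_n}$ the subsystem $A_1$ is a single qubit, so $\ket{\psi}_{A_1(A_2\cdots A_n)}$ has Schmidt rank at most two across the cut $A_1|A_2\cdots A_n$; hence the functional relation~(\ref{eucmix}) applies on the whole parameter region (note $-q^2+4q-3=(q-1)(3-q)\ge 0$, so $0\le s\le 1$ there, and $qs\le 2\le 3$), giving $E^a_{q,s}(\ket{\psi}_{A_1(A_2\cdots A_n)})=E_{q,s}(\ket{\psi}_{A_1(A_2\cdots A_n)})=f_{q,s}(\mathcal{C}_{A_1(A_2\cdots A_n)})$, the first equality because a pure state admits only the trivial pure-state decomposition. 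I would then chain four facts: the concurrence polygamy inequality~(\ref{nCdual}), which gives $\mathcal{C}_{A_1(A_2\cdots A_n)}^2\le\sum_{i=2}^n(\mathcal{C}^a_{A_1A_i})^2$; monotonicity of $f_{q,s}$ on $[0,1]$; iteration of Lemma~\ref{Lem: fnega}, i.e.\ $f_{q,s}(\sqrt{x^2+y^2})\le f_{q,s}(x)+f_{q,s}(y)$, to pass from the Euclidean sum to the ordinary sum; and finally Lemma~\ref{Lem: UEoA_CoA}, i.e.\ $f_{q,s}(\mathcal{C}^a_{A_1A_i})\le E^a_{q,s}(\rho_{A_1A_i})$. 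Composing these yields $E^a_{q,s}(\ket{\psi}_{A_1(A_2\cdots A_n)})\le\sum_{i=2}^n E^a_{q,s}(\rho_{A_1A_i})$.

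One must be slightly careful in the two middle steps: a sum of squared assisted concurrences can exceed $1$ (it equals $n-1$ for the $n$-qubit GHZ state), so $\sqrt{\sum_i(\mathcal{C}^a_{A_1A_i})^2}$ cannot be fed directly into $f_{q,s}$, whose natural domain is $[0,1]$. I would instead clip the quantities $x_i:=\mathcal{C}^a_{A_1A_i}$ to values $\bar x_i\le x_i$ chosen greedily so that each partial sum $\sum_{j\le i}\bar x_j^2$ stays $\le 1$ and $\sum_i\bar x_i^2=\min\{1,\sum_i x_i^2\}\ge\mathcal{C}_{A_1(A_2\cdots A_n)}^2$; repeated application of Lemma~\ref{Lem: fnega} is then legitimate, and combined with monotonicity ($f_{q,s}(\bar x_i)\le f_{q,s}(x_i)$) and $f_{q,s}\ge 0$ on $[0,1]$ it gives $f_{q,s}(\mathcal{C}_{A_1(A_2\cdots A_n)})\le\sum_i f_{q,s}(\bar x_i)\le\sum_i f_{q,s}(x_i)$.

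\emph{Mixed-state case.} Take an optimal decomposition $\rho_{A_1\cdots A_n}=\sum_k p_k\ket{\psi_k}\bra{\psi_k}$ realizing $E^a_{q,s}(\rho_{A_1(A_2\cdots A_n)})=\sum_k p_k E_{q,s}(\ket{\psi_k}_{A_1(A_2\cdots A_n)})$, apply the pure-state bound to each $\ket{\psi_k}$ with its reduced two-qubit states $\rho^k_{A_1A_i}$, and sum to get $E^a_{q,s}(\rho_{A_1(A_2\cdots A_n)})\le\sum_{i=2}^n\sum_k p_k E^a_{q,s}(\rho^k_{A_1A_i})$. The argument closes with the superadditivity of assistance under coarse-graining: since $\rho_{A_1A_i}=\sum_k p_k\rho^k_{A_1A_i}$, concatenating the optimal decompositions of the $\rho^k_{A_1A_i}$ gives a valid pure-state decomposition of $\rho_{A_1A_i}$, so $\sum_k p_k E^a_{q,s}(\rho^k_{A_1A_i})\le E^a_{q,s}(\rho_{A_1A_i})$, which is the claimed inequality.

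I expect the main obstacle to be bookkeeping rather than anything conceptual: keeping every argument passed to $f_{q,s}$ and to Lemma~\ref{Lem: fnega} inside $[0,1]$ (this is exactly what forces the clipping step), and verifying that the region $1\le q\le 2$, $-q^2+4q-3\le s\le 1$ sits simultaneously inside the domains required by Lemma~\ref{Lem: UEoA_CoA}, by Lemma~\ref{Lem: fnega}, and by the functional relation~(\ref{eucmix}). All the genuine analytic work is already done in the two lemmas; the theorem is their assembly through the concurrence polygamy inequality~(\ref{nCdual}).
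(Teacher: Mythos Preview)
Your proposal is correct and follows essentially the same route as the paper: reduce to pure states via the functional relation $E_{q,s}=f_{q,s}\circ\mathcal{C}$ for Schmidt-rank~$\le 2$, combine the concurrence polygamy inequality~(\ref{nCdual}) with monotonicity of $f_{q,s}$, iterate Lemma~\ref{Lem: fnega}, close with Lemma~\ref{Lem: UEoA_CoA}, and then lift to mixed states through an optimal decomposition plus the coarse-graining bound $\sum_k p_k E^a_{q,s}(\rho^k_{A_1A_i})\le E^a_{q,s}(\rho_{A_1A_i})$. Your greedy clipping to keep the arguments of $f_{q,s}$ inside $[0,1]$ is exactly the paper's two-case split (find the index $k$ where the partial sum of $(\mathcal{C}^a_{A_1A_i})^2$ first exceeds $1$, truncate the $(k{+}1)$-th term by $T$) rephrased, and your parameter-region check is a welcome explicit verification the paper leaves implicit.
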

\begin{proof}
We first prove the theorem for a $n$-qubit pure states, and
generalize the proof into mixed states. For a $n$-qubit pure state
$\ket{\psi}_{A_1(A_2\cdots A_n)}$, let us first assume that $
\mathcal{C}_{A_1 (A_2 \cdots A_n)}^2 \leq(\mathcal{C}^a_{A_1 A_2})^2
+(\mathcal{C}^a_{A_1 A_3})^2+\cdots+(\mathcal{C}^a_{A_1 A_n})^2 \leq
1$ in Eq.~(\ref{nCdual}). Then we have
\begin{align}
E_{q,s}\left(\ket{\psi}_{A_1(A_2 \cdots A_n)}\right)
&=f_{q,s}({\mathcal C}_{A_1(A_2 \cdots A_n)})\nonumber\\
&\leq f_{q,s}\left(\sqrt{(\mathcal{C}^a_{A_1 A_2})^2
+\cdots+(\mathcal{C}^a_{A_1 A_n})^2 }\right)\nonumber\\
&\leq f_{q,s}\left( \mathcal{C}^a_{A_1 A_2}\right)\nonumber\\
&~+f_{q,s}\left(\sqrt{(\mathcal{C}^a_{A_1 A_3})^2
+\cdots+(\mathcal{C}^a_{A_1 A_n})^2}\right)\nonumber\\
&~~~~~~~\vdots\nonumber\\
&\leq f_{q,s}\left( \mathcal{C}^a_{A_1 A_2}\right)+
+\cdots+f_{q,s}\left(
\mathcal{C}^a_{A_1 A_n}\right)\nonumber\\
&\leq E_{q,s}^a\left(\rho_{A_1 A_2}\right)
+\cdots+E_{q,s}^a\left(\rho_{A_1 A_n}\right),
\end{align}
where the first inequality is due to the monotonicity of the
function $f_{q,s}(x)$, the second and third inequalities are
obtained by iterative use of Lemma~\ref{Lem: fnega}, and the last
inequality is by Lemma~\ref{Lem: UEoA_CoA}.

Now, let us assume that $ \mathcal{C}_{A_1 (A_2 \cdots
A_n)}^2\leq1<(\mathcal{C}^a_{A_1 A_2})^2 +\cdots+(\mathcal{C}^a_{A_1
A_n})^2 $. Because $f_{q,s}(x)$ is an increasing
function~\cite{KSU}, we have
\begin{align}
E_{q,s}\left(\ket{\psi}_{A_1(A_2 \cdots A_n)}\right)=&
f_{q,s}\left(\mathcal{C}_{A_1 (A_2 \cdots A_n)}^2\right)\nonumber\\
\leq& f_{q,s}\left(1\right)
\end{align}
for any multi-qubit pure state $\ket{\psi}_{A_1(A_2 \cdots A_n)}$.
Thus it is enough to show that $E_{q,s}^a(\rho_{A_1 A_2})
+\cdots+E_{q,s}^a(\rho_{A_1 A_n}) \geq f_{q,s}\left(1\right)$.

Our assumption $1<(\mathcal{C}^a_{A_1 A_2})^2
+\cdots+(\mathcal{C}^a_{A_1 A_n})^2 $ implies that there exists $k
\in \{2,\ldots ,n-1 \}$ such that
\begin{align}
&(\mathcal{C}^a_{A_1 A_2})^2 +\cdots+(\mathcal{C}^a_{A_1 A_k})^2
\leq 1,\nonumber\\
&(\mathcal{C}^a_{A_1 A_2})^2 +\cdots+(\mathcal{C}^a_{A_1 A_{k+1}})^2
>1.
\end{align}
By letting
\begin{equation} T:=(\mathcal{C}^a_{A_1 A_2})^2
+\cdots+(\mathcal{C}^a_{A_1 A_{k+1}})^2-1>0,
\end{equation}
we have
\begin{align}
f_{q,s}\left(1\right) =&f_{q,s}\left(1\right)\nonumber\\
=& f_{q,s} \left( \sqrt{(\mathcal{C}^a_{A_1 A_2})^2
+\cdots+(\mathcal{C}^a_{A_1
A_{k+1}})^2-T} \right)\nonumber\\
\leq& f_{q,s} \left( \sqrt{(\mathcal{C}^a_{A_1 A_2})^2
+\cdots+(\mathcal{C}^a_{A_1 A_k})^2} \right)\nonumber\\
&~~~~~~+f_{q,s} \left( \sqrt{(\mathcal{C}^a_{A_1 A_{k+1}})^2-T} \right)\nonumber\\
\leq& f_{q,s} \left( \mathcal{C}^a_{A_1 A_2}\right)+\cdots +f_{q,s}
\left( \mathcal{C}^a_{A_1 A_k}\right)+
f_{q,s} ( \mathcal{C}^a_{A_1 A_{k+1}})\nonumber\\
\leq& E_{q,s}^a(\rho_{A_1 A_2})+\cdots + E_{q,s}^a(\rho_{A_1 A_n}),
\label{nTmonopure2}
\end{align}
where the first inequality is by using Lemma~\ref{Lem: fnega} with
respect to $(\mathcal{C}^a_{A_1 A_2})^2+\cdots+(\mathcal{C}^a_{A_1
A_k})^2$ and $(\mathcal{C}^a_{A_1 A_{k+1}})^2-T$, the second
inequality is by iterative use of Lemma~\ref{Lem: fnega} on
$(\mathcal{C}^a_{A_1 A_2})^2+\cdots+(\mathcal{C}^a_{A_1 A_k})^2$,
and the last inequality is by Lemma~\ref{Lem: UEoA_CoA}.

Now let us consider multi-qubit mixed states.
For a $n$-qubit mixed state  $\rho_{A_1A_2\cdots A_n}$, let
$\rho_{A_1(A_2\cdots A_n)}=\sum_j p_j \ket{\psi_j}_{A_1(A_2\cdots
A_n)}\bra{\psi_j}$ be an optimal decomposition for UEoA such that
\begin{align}
E_{q,s}^a\left(\rho_{A_1(A_2\cdots A_n)}\right)=\sum_j p_j
E_{q,s}\left(\ket{\psi_j}_{A_1(A_2\cdots A_n)}\right).
\label{psiopt}
\end{align}
Because each
$\ket{\psi_j}_{A_1(A_2\cdots A_n)}$ in the decomposition is an
$n$-qubit pure state, we have
\begin{align}
E_{q,s}\left(\ket{\psi_j}_{A_1(A_2\cdots
A_n)}\right) \leq E_{q,s}^a\left(\rho^j_{A_1A_2}\right)+&\nonumber\\
\cdots+&E_{q,s}^a\left(\rho^j_{A_1A_n}\right)
\label{unipolypsii}
\end{align}
where $\rho^j_{A_1A_i}$ is the reduced density matrix of
$\ket{\psi_j}_{A_1\cdots A_n}$ onto two-qubit subsystem $A_1A_i$ for
each $i=2,\cdots,n$.
From Eq.~(\ref{psiopt}) together with Inequality~(\ref{unipolypsii}), we have
\begin{align}
E_{q,s}^a\left(\rho_{A_1(A_2\cdots A_n)}\right)=&\sum_j p_j
E_{q,s}\left(\ket{\psi_j}_{A_1(A_2\cdots
A_n)}\right)\nonumber\\
\leq &\sum_j p_jE_{q,s}^a\left(\rho^j_{A_1A_2}\right)+\cdots\nonumber\\
&~~~~~~~~~~ +\sum_j
p_jE_{q,s}^a\left(\rho^j_{A_1A_n}\right) \nonumber\\
\leq&E_{q,s}^a\left(\rho_{A_1A_2}\right)+\cdots
+E_{q,s}^a\left(\rho_{A_1A_n}\right),
\label{unipolymixed}
\end{align}
where the last inequality is by definition of UEoA
for each $\rho_{A_1A_i}$.
\end{proof}

We note that Inequality (\ref{UEpoly}) is reduced to Tsallis-$q$
monogamy inequality~\cite{KT}
\begin{equation}
{\mathcal T}^a_{q}\left( \rho_{A_1(A_2 \cdots A_n)}\right)\leq
{\mathcal T}^a_{q}(\rho_{A_1 A_2}) +\cdots+{\mathcal
T}^a_{q}(\rho_{A_1 A_n}) \label{Tpoly}
\end{equation}
as $s$ tends to 1, and it also reduces to the multi-qubit polygamy
inequality in terms of EoA~\cite{BGK} as $q$ tends to 1. For $q=2$
and $s=1$, unified-$(q,s)$ entanglement coincides with the squared
concurrence for two-qubit pure states; for a bipartite pure state
$\ket{\psi}_{AB}$ with Schmidt-rank 2,
\begin{equation}
E_{2,1}\left(\ket{\psi}_{AB}\right)={\mathcal
C}^2\left(\ket{\psi}_{AB}\right). \label{unitangle}
\end{equation}
For this relation, it is also straightforward to verify that
Inequality (\ref{UEpoly}) reduces to Inequality~(\ref{nCdual}) as
$q\rightarrow 2$ and $s\rightarrow1$.
Thus, Theorem~\ref{Thm: poly} provides an interpolation among EoA,
TEoA and CoA polygamy inequalities of multi-qubit entanglement,
which is illustrated in Figure~\ref{fig2}.
\begin{figure}
\includegraphics[width=.9\linewidth]{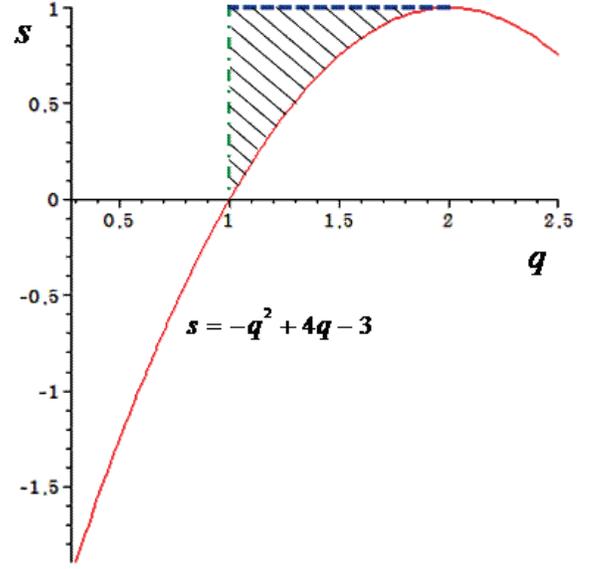}\\
\caption{(Color online) The domain of real parameters $q$ and $s$
where multi-qubit polygamy inequality holds in terms of UEoA. The
dashed line indicates the domain for which the multi-qubit polygamy
inequality holds for Tsallis-$q$ entropy (TEoA), and the dashed-dot
line is the domain for von Neumann entropy (EoA). The shaded range
is for unified-$(q,s)$ entropy (UEoA).} \label{fig2}
\end{figure}
We further note that the continuity of unified-$(q,s)$ entropy also
guarantees multi-qubit polygamy inequality in terms of UEoA when $q$
and $s$ are slightly outside of the proposed domain in
Figure~\ref{fig2}.

In three-qubit systems, Inequality~(\ref{UEpoly}) in Theorem~\ref{Thm: poly} can be improved into a
tighter form. A direct observation
from~\cite{ckw} shows
\begin{equation}
\mathcal{C}_{A(BC)}^2 = \mathcal{C}_{AB}^2 + ({\mathcal{C}^{a}_{AC}})^2,
\label{3tangle}
\end{equation}
for a 3-qubit pure state $\ket{\psi}_{ABC}$
where $\mathcal{C}_{AB}$ and $\mathcal{C}^{a}_{AC}$ are the
concurrence and CoA of $\rho_{AB}$ and $\rho_{AC}$ respectively.

From Eq.~(\ref{3tangle}) together with Lemma~\ref{Lem: fnega}, we
have the following tighter polygamy inequality of three-qubit
entanglement.
\begin{Thm}
For $1 \leq q \leq 2$, $-q^2+4q-3 \leq s \leq1$ and any three-qubit
pure state $\ket{\psi}_{ABC}$, we have
\begin{equation}
E_{q,s}\left( \ket{\psi}_{A(BC)}\right)\leq E_{q,s}(\rho_{AB})
+E^a_{q,s}(\rho_{AC}) \label{3UEpoly}
\end{equation}
where $E_{q,s}\left( \ket{\psi}_{A(BC)}\right)$ is the
unified-$(q,s)$ entanglement of $\ket{\psi}_{ABC}$ with respect to
the bipartition between $A$ and $BC$, $E_{q,s}(\rho_{AB})$ is the
unified-$(q,s)$ entanglement of $\rho_{AB}$ and
$E^a_{q,s}(\rho_{AC})$ is the UEoA of $\rho_{AC}$. \label{3unipoly}
\end{Thm}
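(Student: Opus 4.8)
The plan is to combine the exact identity in Eq.~(\ref{3tangle}) with the monotonicity of $f_{q,s}$ and the subadditivity-type Lemma~\ref{Lem: fnega}, exactly as in the multi-qubit case of Theorem~\ref{Thm: poly} but specialized to three qubits where the inequality~(\ref{nCdual}) is replaced by the sharper equality $\mathcal{C}_{A(BC)}^2 = \mathcal{C}_{AB}^2 + (\mathcal{C}^a_{AC})^2$. First I would split into two cases according to whether $\mathcal{C}_{AB}^2 + (\mathcal{C}^a_{AC})^2 \le 1$ or not, mirroring the two-case structure in the proof of Theorem~\ref{Thm: poly}, since $f_{q,s}$ and Lemma~\ref{Lem: fnega} are only controlled on the unit disk.

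In the first case, using Eq.~(\ref{eucmix}) I write $E_{q,s}(\ket{\psi}_{A(BC)}) = f_{q,s}(\mathcal{C}_{A(BC)}) = f_{q,s}\!\left(\sqrt{\mathcal{C}_{AB}^2 + (\mathcal{C}^a_{AC})^2}\right)$; then Lemma~\ref{Lem: fnega} with $x=\mathcal{C}_{AB}$ and $y=\mathcal{C}^a_{AC}$ bounds this by $f_{q,s}(\mathcal{C}_{AB}) + f_{q,s}(\mathcal{C}^a_{AC})$. Finally $f_{q,s}(\mathcal{C}_{AB}) = E_{q,s}(\rho_{AB})$ by the functional relation Eq.~(\ref{eucmix}) for the two-qubit state $\rho_{AB}$, and $f_{q,s}(\mathcal{C}^a_{AC}) \le E^a_{q,s}(\rho_{AC})$ by Lemma~\ref{Lem: UEoA_CoA}, which gives Inequality~(\ref{3UEpoly}). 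In the second case, where $\mathcal{C}_{AB}^2 + (\mathcal{C}^a_{AC})^2 > 1$, I use that $\mathcal{C}_{A(BC)}^2 \le 1$ together with the monotonicity of $f_{q,s}$ to get $E_{q,s}(\ket{\psi}_{A(BC)}) = f_{q,s}(\mathcal{C}_{A(BC)}) \le f_{q,s}(1)$, and then, setting $T := \mathcal{C}_{AB}^2 + (\mathcal{C}^a_{AC})^2 - 1 > 0$, I write $f_{q,s}(1) = f_{q,s}\!\left(\sqrt{\mathcal{C}_{AB}^2 + ((\mathcal{C}^a_{AC})^2 - T)}\right)$ and apply Lemma~\ref{Lem: fnega} with $x = \mathcal{C}_{AB}$ and $y = \sqrt{(\mathcal{C}^a_{AC})^2 - T}$ (both lie in $[0,1]$ since $0 \le (\mathcal{C}^a_{AC})^2 - T \le (\mathcal{C}^a_{AC})^2 \le 1$), followed by monotonicity and Lemma~\ref{Lem: UEoA_CoA} as before.

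There is essentially no hard part beyond carefully checking the range hypotheses at each invocation of Lemma~\ref{Lem: fnega} and Lemma~\ref{Lem: UEoA_CoA}: the arguments must stay in $[0,1]$ and the sum of squares must not exceed $1$, which is exactly what the two-case split and the definition of $T$ ensure. The main thing to be careful about is that in the first summand one gets $E_{q,s}(\rho_{AB})$ (unified entanglement, the \emph{min}) rather than $E^a_{q,s}(\rho_{AB})$, which requires the equality $f_{q,s}(\mathcal{C}(\rho_{AB})) = E_{q,s}(\rho_{AB})$ from Eq.~(\ref{eucmix}) for the two-qubit mixed state $\rho_{AB}$, whereas for the $AC$ term one only has the one-sided bound $f_{q,s}(\mathcal{C}^a(\rho_{AC})) \le E^a_{q,s}(\rho_{AC})$ from Lemma~\ref{Lem: UEoA_CoA}; this asymmetry is precisely what makes~(\ref{3UEpoly}) tighter than~(\ref{UEpoly}) restricted to $n=3$, since the bipartite term $E_{q,s}(\rho_{AB})$ replaces the larger $E^a_{q,s}(\rho_{AB})$.
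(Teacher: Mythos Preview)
Your proposal is correct and follows essentially the same route as the paper's proof. One simplification you missed: the second case is vacuous, because Eq.~(\ref{3tangle}) gives $\mathcal{C}_{AB}^2 + (\mathcal{C}^a_{AC})^2 = \mathcal{C}_{A(BC)}^2 \leq 1$ automatically (since $A$ is a single qubit), so the paper dispenses with the case split and applies Lemma~\ref{Lem: fnega} directly.
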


\begin{proof}
Because $\ket{\psi}_{ABC}$ is a bipartite pure state between $A$ and
$BC$ with Schmidt-rank less than or equal to two, we have
\begin{equation}
E_{q,s}\left(\psi_{A(BC)}\right)=f_{q,s}\left(\mathcal{C}_{A(BC)}\right).
\end{equation}
Thus,
\begin{align}
f_{q,s}\left(\mathcal{C}_{A(BC)}\right)
&= f_{q,s}\left(\sqrt{\mathcal{C}^{2}_{AB}+{\mathcal{C}^{a}_{AC}}^2}\right)\nonumber\\
&\leq  f_{q,s}(\mathcal{C}_{AB})+ f_{q,s}(\mathcal{C}^{a}_{AC})\nonumber\\
&\leq  E_{q,s}(\rho_{AB})+E^a_{q,s}(\rho_{AC}),
\end{align}
where the first inequality is by Lemma~\ref{Lem: fnega}, and the
second inequality is by Lemma~\ref{Lem: UEoA_CoA}.
\end{proof}


\section{Conclusion}
\label{Conclusion}

Using unified-$(q,s)$ entropy, we have provided a two-parameter
generalization of EoA, namely UEoA with an analytical lowerbound
in two-qubit systems for $q\geq 1$, $0\leq s \leq1$ and $qs\leq3$.
Based on this unified formalism of EoA, we have established a broad class of
multi-qubit polygamy inequalities in terms of unified-$(q,s)$
entanglement for $1 \leq q \leq 2$, $-q^2+4q-3 \leq s \leq1$.
We have also shown a tighter polygamy inequality for the case of
three-qubit pure states.

The class of polygamy inequalities we provided here
encapsulates every known case of multi-qubit polygamy inequality
in terms of EoA, CoA or TEoA as special cases, as well as their explicit
relation with respect to a differential function $f_{q,s}(x)$.
Thus our result provides a useful
methodology to understand the restricted distribution of entanglement in multi-party
quantum systems.

\section*{Acknowledgments}
This work was supported by Emerging Technology R\&D Center of SK Telecom.


\end{document}